\newtheorem{theorem}{Theorem}[section]
\newtheorem{lemma}[theorem]{Lemma}
\newtheorem{proposition}[theorem]{Proposition}
\newtheorem{remark}[theorem]{Remark}
\newcommand{\dd}{\mathrm{d}}
\newcommand{\R}{{\mathord{\mathbb R}}}
\newcommand{\N}{{\mathord{\mathbb N}}}
\newcommand{\supp}{{\mathop{\rm supp\ }}}
\newcommand{\dom}[1]{\mathrm{Dom}(#1)}
\newcommand{\vertiii}[1]{{\left\vert\kern-0.25ex\left\vert\kern-0.25ex\left\vert #1 
    \right\vert\kern-0.25ex\right\vert\kern-0.25ex\right\vert}}
\DeclareMathOperator*{\esssup}{ess\,sup}
\DeclareMathOperator*{\essinf}{ess\,inf}
\title{On an extension of the Iwatsuka model}
\author{Mat\v{e}j Tu\v{s}ek}
\date{\today}
\address{Department of Mathematics,
Faculty of Nuclear Sciences and Physical Engineering,
Czech Technical University in Prague,
Trojanova 13, 120\,00 Prague 2, Czech Republic}
\email{tusekmat@fjfi.cvut.cz}
\begin{document}

\begin{abstract}
  We prove absolute continuity for an extended class of two--dimen\-sional magnetic Hamiltonians that were initially studied by A. Iwatsuka. In particular, we add an electric field that is translation invariant in the same direction as the magnetic field is.  As an example, we study the effective Hamiltonian for a thin quantum layer in a homogeneous magnetic field.
\end{abstract}

\maketitle

\section{Introduction}
Consider a charged spin-less massive particle in a plane subject to  electric and magnetic fields that are both invariant with respect to the translations in $y$-direction. So if we denote these fields  $W$ and $B$, respectively, they are functions $W(x),\, B(x)$ of $x$ alone. Within the realm of non-relativistic quantum mechanics, the dynamics of the particle is governed by the following Hamiltonian
\begin{equation}\label{eq:ham_def}
 H=-\partial_{x}^{2}+(-i\partial_{y}+A_y(x))^2+W(x),
\end{equation}
where 
\begin{equation}\label{eq:vec_pot}
 A_y(x)=\int_0^{x}B(t)\dd t.
\end{equation}
Here we chose the Landau (asymmetric) gauge, put the reduced Planck constant and the ratio of the elementary charge to the speed of light equal to $1$, and fixed the particle’s mass to be $1/2$. The selfadjointness of this operator will be discussed at the beginning of Section \ref{sec:AC}.

It is long conjectured \cite{CFKS} that in the case without electric field ($W=0$), $H$ is purely absolutely continuous, i.e. $\sigma(H)=\sigma_{ac}(H)$, as soon as $B$ is non-constant. The conjecture was motivated by a seminal paper by A.~Iwatsuka \cite{Iw_85}. In acknowledgement of his achievement, the model described by  \eqref{eq:ham_def} bears his name.  He proved the absolute continuity of $H$ under the following additional pair of assumptions 
\begin{description}
 \item[AS1] $B\in C^\infty(\R;\R)$, and there exist constants $M_\pm$ such that $0<M_{-}\leq B\leq M_{+}$.
 \item[AS2] and either of the following holds
 \begin{description}
 \item[AS2a] $\limsup_{x\to -\infty}B(x)<\liminf_{x\to+\infty}B(x)$\\ or $\limsup_{x\to+ \infty}B(x)<\liminf_{x\to-\infty}B(x)$
 \item[AS2b]  $B$ is constant for all $|x|$ sufficiently large but non-constant on $\R$, and there exists $x_0$ such that $B'(x_{-})B'(x_{+})\leq 0$ for all $x_{-}\leq x_0\leq x_{+}$.
 \end{description}
\end{description}
In fact, \textbf{AS2b} may be relaxed to
\begin{description}
 \item[\phantom]
\begin{description}
 \item[AS2c] $B$ is non-constant and there exists a point $x_0$ such that for all $x_1,\, x_2$ with $x_1\leq x_0\leq x_2$ one has either $B(x_1)\leq B(x_0)\leq B(x_2)$ or $B(x_1)\geq B(x_0)\geq B(x_2)$,
\end{description}
\end{description}
as was proved by M.~M\v{a}ntoiu and R.~Purice \cite{Ma_Pu_97}. Remark that there is also some overlap of \textbf{AS2c}  with \textbf{AS2a}.

Another nice result concerning a variation of the magnetic field that is compactly supported (in $x$-variable) was given by P.~Exner and H.~Kova\v{r}\'{i}k \cite{ExKo_00}. They proved that $H$ is purely absolutely continuous if
\begin{description}
 \item[AS3] $B(x)=B_0+b(x)$, where $B_0>0$ and $b$ is bounded, piecewise continuous and compactly supported
 \item[AS4] and either of the following holds
 \begin{description}
 \item[AS4a] $b$ is nonzero and does not change sign
 \item[AS4b] let $[a_l,a_r]$ be the smallest closed interval that contains $\supp{b}$; there are $c,\delta>0$ and $m\in\N$ such that 
 $|b(x)|\geq c(x-a_l)^m \text{ or } |b(x)|\geq c(a_r-x)^m$
 for all $x\in[a_l,a_l+\delta)$ or $x\in(a_r-\delta,a_r]$, respectively.
 \end{description}
\end{description}

In this paper we generalize \textbf{AS2a} to the case when the electric field is switched on relaxing \textbf{AS1} simultaneously. First, we will need some notation to lighten the text. For any $f\in L^\infty(\R;\R)$, let us define
\begin{align*}
 \underline{f}_{+}&=\sup_{a\in\R}\essinf_{t\in(a,+\infty)}f(t) & \overline{f}_{+}&=\inf_{a\in\R}\esssup_{t\in(a,+\infty)}f(t)\\
 \underline{f}_{-}&=\sup_{a\in\R}\essinf_{t\in(-\infty,a)}f(t) & \overline{f}_{-}&=\inf_{a\in\R}\esssup_{t\in(-\infty,a)}f(t).
\end{align*}

\begin{theorem}\label{theo:main}
 Let $B,\, W\in L^\infty(\R;\R)$ be such that  either
 \begin{equation}\label{eq:ac_cond} 
  \underline{B}_{\pm}>0\,\wedge\, \underline{B}_{+}\geq\overline{B}_{-}\,\wedge\,(\overline{W}_{-}-\underline{W}_{+}<\underline{B}_{+}-\overline{B}_{-})
 \end{equation}
 or
 \begin{equation}\label{eq:ac_cond_alt}
 \underline{B}_{+}>0\,\wedge\, \overline{B}_{-}<0.
 \end{equation}
 Then $H$ is purely absolutely continuous. The same holds true if we interchange the $\pm$ indices everywhere in \eqref{eq:ac_cond} and \eqref{eq:ac_cond_alt}.
\end{theorem}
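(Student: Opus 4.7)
Translation invariance in the $y$ variable is exploited via the partial Fourier transform $\mathcal{F}_y$, which yields the direct integral decomposition
\begin{equation*}
 \mathcal{F}_y H \mathcal{F}_y^{-1} = \int_\R^{\oplus} h(k)\,\dd k, \qquad h(k) = -\partial_x^2 + (k + A_y(x))^2 + W(x),
\end{equation*}
acting on $L^2(\R)$. Under \eqref{eq:ac_cond} the assumption $\underline{B}_\pm > 0$ forces $A_y(x) \to \pm\infty$ as $x \to \pm\infty$, while under \eqref{eq:ac_cond_alt} the signs of $B$ at the two tails cooperate to give $A_y(x) \to +\infty$ on both sides. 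In either case $(k+A_y(x))^2$ is coercive on $\R$ for every $k \in \R$, so $h(k)$ has compact resolvent. Since $h(k)$ depends polynomially on $k$ with a $k$-independent form domain $\{\psi \in H^1(\R) : A_y\psi \in L^2(\R)\}$, Kato's analytic perturbation theory produces a family of locally real-analytic band functions $\{E_n(k)\}_{n \in \N}$ that exhaust $\sigma(h(k))$.

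It is standard for fibered self-adjoint operators of this type that purely absolutely continuous spectrum is implied by the non-constancy of every band function: the spectral measure of the $n$-th band on $H$ is the pushforward of Lebesgue measure under the analytic map $E_n$, and this pushforward is absolutely continuous because the critical set of a non-constant real-analytic function is discrete. The entire problem thus reduces to showing that every $E_n$ is non-constant on $\R$. My strategy is to compare the asymptotic behaviour of $E_n(k)$ as $k \to +\infty$ with that as $k \to -\infty$, driven by the heuristic that when $|k|$ is large the effective potential $V_k(x) = (k + A_y(x))^2 + W(x)$ concentrates near the points $x_k$ solving $A_y(x_k) = -k$, where the harmonic approximation produces the Landau-type eigenvalues $(2n-1)|B(x_k)| + W(x_k)$.

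Under \eqref{eq:ac_cond} the minimum of $V_k$ drifts to $-\infty$ as $k \to +\infty$ and to $+\infty$ as $k \to -\infty$. Combining upper bounds from Hermite-function trial states centred at $x_k$ with lower bounds from an IMS-type localisation of $h(k)$, I expect to establish
\begin{align*}
 \limsup_{k\to+\infty} E_n(k) &\leq (2n-1)\overline{B}_- + \overline{W}_-, \\
 \liminf_{k\to-\infty} E_n(k) &\geq (2n-1)\underline{B}_+ + \underline{W}_+.
\end{align*}
Because $2n-1 \geq 1$, the assumption $\overline{W}_- - \underline{W}_+ < \underline{B}_+ - \overline{B}_-$ of \eqref{eq:ac_cond} then gives the strict gap $(2n-1)\overline{B}_- + \overline{W}_- < (2n-1)\underline{B}_+ + \underline{W}_+$ for every $n \geq 1$, making every band non-constant. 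Under \eqref{eq:ac_cond_alt}, $A_y$ has a finite global minimum, so $(k+A_y(x))^2 \to +\infty$ uniformly as $k \to +\infty$ and consequently $E_n(k) \to +\infty$; for $k \to -\infty$ the twin wells near $\pm\infty$ allow a Gaussian trial-function computation bounding $E_n(k)$ uniformly from above, and non-constancy is immediate. The symmetric $\pm$-interchanged cases reduce to the above by the change $x \mapsto -x$, $B \mapsto -B$.

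The principal obstacle is the asymptotic analysis described above: since $B$ and $W$ are only in $L^\infty$, the classical Landau approximation (typically stated for smooth $B$) must be refined so that only the essential infima and suprema on the tails appear in the final bounds. The cleanest route I envisage is a partition-of-unity covering of the tail regions by intervals of length on the scale of $1/\sqrt{\underline{B}_\pm}$, combined with the change of variables $u = k + A_y(x)$, which straightens the magnetic potential and reduces each localised piece to a perturbed harmonic oscillator whose error term can be dominated by the oscillation of $B$ and $W$ on the corresponding tail and absorbed into the $\essinf$/$\esssup$ quantities.
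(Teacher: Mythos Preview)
Your overall architecture coincides with the paper's: direct integral in $y$, analyticity of the fiber family, compact resolvent from coercivity of $(k+A_y)^2$, and non-constancy of each band via a comparison of the two asymptotic regimes $k\to\pm\infty$. Under \eqref{eq:ac_cond_alt} your argument is identical to the paper's (one tail sends $\lambda_n$ to $+\infty$; the upper bound you sketch for the other tail is not even needed). Under \eqref{eq:ac_cond} you also aim for precisely the bounds the paper proves,
\[
\limsup_{k\to+\infty}E_n(k)\le(2n-1)\overline{B}_-+\overline{W}_-,\qquad
\liminf_{k\to-\infty}E_n(k)\ge(2n-1)\underline{B}_++\underline{W}_+,
\]
and draw the same conclusion from $\overline{W}_--\underline{W}_+<\underline{B}_+-\overline{B}_-$.

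Where you diverge is in the mechanism producing those bounds. The paper bypasses IMS localisation and semiclassical machinery entirely: it first proves a \emph{pointwise} sandwich $\underline{V}_{\varepsilon,+}(x)+\underline{W}_+-\varepsilon\le(k+A_y(x))^2+W(x)\le\overline{V}_{\varepsilon,+}(x)+\overline{W}_++\varepsilon$ for all sufficiently negative $k$, where $\underline{V}_{\varepsilon,+},\overline{V}_{\varepsilon,+}$ are explicit piecewise-quadratic potentials centred at the unique zero $x_k$ of $k+A_y$. The corresponding comparison operators are then shown to converge in norm-resolvent sense to harmonic oscillators $-d_x^2+\omega^2(x-x_k)^2$ as $x_k\to+\infty$, via a short abstract theorem (eventual agreement on compactly supported test functions plus a uniform lower bound implies norm-resolvent convergence). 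This buys the $L^\infty$-only result with essentially no analysis of eigenfunctions or localisation errors.

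Your route through Hermite trial states (upper bound) is fine and equivalent in spirit; the lower bound via IMS plus the substitution $u=k+A_y(x)$ is where you should be cautious. That change of variables turns $-\partial_x^2$ into an operator with $B(x(u))$ appearing as a variable coefficient in the principal part, and with $B$ merely in $L^\infty$ this is awkward to control (no regularity, and $B$ may vanish on a compact set). If you abandon the change of variables and simply use the inequality $|k+A_y(x)|\ge(\underline{B}_+-\varepsilon)|x-x_k|$ on the relevant tail, you recover exactly the paper's pointwise potential bound and the IMS step becomes superfluous. One further omission: you do not address simplicity of the eigenvalues of $h(k)$, which the paper invokes (following Iwatsuka and Naimark's Sturm--Liouville theory) to make the band labelling and the non-constancy criterion clean.
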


Let us stress that we do not require $B$ to be everywhere greater then some positive constant. In fact, under \eqref{eq:ac_cond_alt}, $B$ has to change its sign, and under \eqref{eq:ac_cond}, it may be negative on a compact set. As far as I can see, it is not possible to extend the proof of \cite{Ma_Pu_97} to include this case, nor to the case of non-zero $W$. On the other hand, the Iwatsuka's proof may be non-trivially modified to work under only slightly  stricter assumptions than those of Theorem \ref{theo:main}. In particular, one needs the derivative of $W$ to be in $L^\infty$. 

The Iwatsuka's strategy may be described as follows. First, decompose $H$ into the direct integral of one-dimensional operators with purely discrete spectrum. Then show that these fiber operators form an analytic family with simple and non-constant eigenvalues with respect to the quasi-momentum parameter. Although our proof, as well as all the proofs of the above mentioned results, follows this strategy, it differs in the method used when proving the last step, i.e. the non-constancy of the eigenvalues. Let us stress that this very step is typically the most difficult to prove. Whereas Iwatsuka needed some estimates on the growth of the eigenfunctions to show that the asymptotic behaviour of the eigenvalues in $\pm\infty$ is determined by that of the magnetic field, we derive the asymptotic behaviour of the eigenvalues directly using some comparison argument  based on the minimax principle combined with a norm-resolvent convergence result.

Iwatsuka's model with a non-zero electric field ($W\neq 0$) of a particular type has been studied before in \cite{ExJoKo_01}. There it was proved that $H_{\mathrm{L}}+\omega^2x^2$ remains purely absolutely continuous under a perturbation that is either a bounded function of $x$--variable only or a bounded periodic function of $y$--variable only. Here, $H_\mathrm{L}$ is the Landau Hamiltonian (\eqref{eq:ham_def} with $W=0$ and constant $B\neq 0$) and $\omega>0$.
For $H=H_{\mathrm{L}}+W$, where $W=W(x)$ is non-decreasing non-constant bounded function, it was proved that $\sigma(H)$ has a band structure and is purely absolutely continuous \cite{BrMiRa_11}. The asymptotic distribution of the discrete spectrum of $H$ under a bounded perturbation of constant sign that decays at infinity was investigated in the same paper. The same problem with $W=W(x)$ now being  periodic was addressed in \cite{MiRa_12}. In this case, the absolute continuity of $H$ was demonstrated only below a fixed but arbitrarily large energy when the magnetic field is strong enough. The asymptotic distribution of eigenvalues in spectral gaps was also studied for the case of the Iwatsuka Hamiltonian, essentially satisfying \textbf{AS1} and \textbf{AS2a}, with an additional electric potential that is either power-like decaying at infinity or compactly supported \cite{Mi_15}.

Transport properties of the Iwatsuka model are also of continuous interest. A non-constant translation invariant magnetic field acts as a magnetic barrier and gives rise to the so-called edge currents that are quantized \cite{DoGeRa_11}. When the barrier is sharp, the current-carrying states are well localized and stable with respect to various magnetic and electric perturbations \cite{HiSo_15,DoHiSo_14}. The latter paper deals solely with one the configurations when the magnetic field is constant on each of two complementary half-planes (the so-called magnetic steps). The analysis of the magnetic steps is completed in \cite{HiPoRaSu_16}. They had been studied before from a physicist's point of view in \cite{RePe_00}.

Finally, to demonstrate the richness of the topic, let us mention that a three-dimensional version of the Iwatsuka model was studied in \cite{Ya_08} and  random analogues of the Iwatsuka model were examined in \cite{LeWaWe_06}.

The paper is organized as follows. In Section \ref{sec:AC}, we introduce  operator \eqref{eq:ham_def} properly and prove that its spectrum is purely absolutely continuous under our assumptions. The proof relies on an abstract operator convergence result that is proved separately in Section \ref{sec:abstract_result}. Section \ref{sec:example} is devoted to a significant example that comes from the realm of the so--called  quantum waveguides. A particle confined in a very thin curved layer in an ambient constant magnetic field is effectively subjected to a non--constant magnetic field that is given by the projection into the normal direction to the layer \cite{KrRaTu_15}. Moreover, the non-trivial curvature gives rise to an additional attractive scalar potential. An interplay between magnetic and electric fields (that may be, in the discussed example, expressed solely in terms of geometric 
quantities) is reflected in our sufficient condition \eqref{eq:ac_cond}.

\section{Absolute continuity of the Hamiltonian}\label{sec:AC}
\subsection{Direct integral decomposition} \label{sec:decopm}

 By $H$ we mean the closure of $\dot{H}$ given by 
 \begin{align*}
  & \dot{H}=-\partial_{x}^{2}+(-i\partial_{y}+A_y(x))^2+W(x)\\
  &\dom{\dot{H}}=C_{0}^{\infty}(\R^2)\subset L^2(\R^2),
 \end{align*}
 where $W\in L^\infty(\R;\R)$ and $A_y$ is given by \eqref{eq:vec_pot} with $B\in L^\infty(\R;\R)$.
 $H$ is selfadjoint \cite{LeSi_81} (we refer to this paper whenever essential selfadjointness is mentioned) and commutes with the translations in $y$-direction. In \cite{Iw_85}, it was demonstrated that $H$ is unitarily equivalent to a direct integral in $L^2(\R_\xi;L^2(\R_x))$ of selfadjoint operators  $\{H[\xi],\ \xi\in\R\}$ with the following action
$$H[\xi]=-d^{2}_{x}+(\xi+A_y(x))^2+W(x).$$
For any $\xi\in\R$, $C_{0}^{\infty}(\R)$ is a core of $H[\xi]$.

To prove the absolute continuity of $H$, it is sufficient to show that \cite[Theo. XIII.86]{RS4} 
\begin{enumerate}[1)]
 \item The family $\{H[\xi]|\, \xi\in\R\}$ is analytic in $\xi$.
 \item For all $\xi\in\R$, $H[\xi]$ has compact resolvent.
 \item If we number the eigenvalues of $H[\xi]$ in strictly increasing order as $\lambda_n[\xi],\ n\in\N$, then every $\lambda_n[\xi]$ is simple and no $\lambda_n[\xi]$ is constant in $\xi$.
\end{enumerate}
For any $\xi_0\in\R$, we may write
$$H[\xi]=H[\xi_0]\dotplus p_{\xi},$$
where the quadratic form $p_{\xi}$ reads
\begin{equation}\label{eq:xi_pert}
p_{\xi}(\psi)=(\xi-\xi_0)^2 \|\psi\|^2+2(\xi-\xi_0)\langle\psi,(\xi_0+A_{y})\psi\rangle.
\end{equation}
For all $\delta>0$, one easily gets
\begin{equation*}
\begin{split}
 |p_{\xi}(\psi)|&\leq (\xi-\xi_0)^2 (1 +\delta^{-1})\|\psi\|^2+\delta\|(\xi_0+A_{y})\psi\|^2\\
 &\leq (\xi-\xi_0)^2 (1+\delta^{-1})\|\psi\|^2 +\delta\langle\psi,H[\xi_0]\psi\rangle+\delta\langle\psi,W_{-}\psi\rangle\\
 &\leq \left((\xi-\xi_0)^2 (1+\delta^{-1})+\delta\|W_{-}\|_\infty\right)\|\psi\|^2 +\delta\langle\psi,H[\xi_0]\psi\rangle,
\end{split}
\end{equation*}
where $W_{-}$ stands for the negative part of $W$.

Hence, $p_{\xi}$ is infinitesimally form bounded by $H[\xi_0]$.  This together with (\ref{eq:xi_pert}) implies that, $H[\xi]$ forms an analytic family of type (B). In particular $H[\xi]$ is an analytic family in the sense of Kato \cite{RS4}, which proves the first point.

Assuming either \eqref{eq:ac_cond_alt} or the first part of \eqref{eq:ac_cond}, we deduce that $|\lim_{x\to\pm\infty}A_y(x)|=+\infty$. This implies compactness of the resolvent of $H[\xi]$ \cite[Theo. XIII.67]{RS4}, i.e., the second condition. Consequently, the spectrum of $H[\xi]$ is purely discrete. Moreover, mimicking the proofs of \cite[Prop. 3.1, Lem. 2.3(i)]{Iw_85} (using results of \cite[\textsection 16]{Nai2} and a simple observation that any continuous regular distribution with almost everywhere non-negative weak derivative is everywhere non--decreasing), one may infer that all the eigenvalues of $H[\xi]$ are simple. Therefore, the first part of the third condition holds true, too.

\begin{remark}
 With these results in hand we may conclude that the singular continuous component in the spectrum of $H$ is empty \cite{FiSo_06}.
\end{remark}

The second part of the third condition is easy to verify under the assumption \eqref{eq:ac_cond_alt}. If $\underline{B}_{+}>0\wedge \overline{B}_{-}<0$ then $\lim_{x\to\pm\infty}A_y(x)=+\infty$. Using the minimax principle we obtain $\lim_{\xi\to+\infty}\lambda_n[\xi]=+\infty$. If $\underline{B}_{-}>0\wedge \overline{B}_{+}<0$ then $\lim_{x\to\pm\infty}A_y(x)=-\infty$ and $\lim_{\xi\to-\infty}\lambda_n[\xi]=+\infty$.

The rest of this section is devoted to the verification of the third condition under assumption \eqref{eq:ac_cond}. In particular, we always assume that $\underline{B}_{\pm}>0$.

\subsection{Some auxiliary results}

\subsubsection{Estimate on the potential}

\begin{lemma}\label{lem:pot_est}
 Let \eqref{eq:ac_cond} holds and $\varepsilon\in(0,\min\{\underline{B}_{+},\underline{B}_{-}\}/2)$. Then $\xi_\varepsilon\in\R$ exists such that, for all $\xi<\xi_\varepsilon$,
 \begin{equation}\label{eq:pot_est}
  \underline{V}_{\varepsilon,+}(x)+\underline{W}_{+}-\varepsilon\leq (\xi+A_y(x))^2 +W(x)\leq\overline{V}_{\varepsilon,+}(x)+\overline{W}_{+}+\varepsilon\quad (\text{a.e. }x).
 \end{equation}
 Similarly, $\tilde{\xi}_\varepsilon\in\R$ exists such that, for all $\xi>\tilde{\xi}_\varepsilon$,
 \begin{equation*}
  \underline{V}_{\varepsilon,-}(x)+\underline{W}_{-}-\varepsilon\leq (\xi+A_y(x))^2 +W(x)\leq\overline{V}_{\varepsilon,-}(x)+\overline{W}_{-}+\varepsilon\quad (\text{a.e. }x).
 \end{equation*}
 Here,
 \begin{align*}
 &\underline{V}_{\varepsilon,+}(x):=\begin{cases}
                                  (\underline{B}_{+}-2\varepsilon)^2 (x-x_\xi)^2&\text{ for  }x\geq -K_{\varepsilon}\\
                                  \big((B_{min}-\varepsilon)(x+K_\varepsilon)+(\underline{B}_{+}-2\varepsilon) (-K_\varepsilon-x_\xi)\big)^2&\text{ for }x<- K_{\varepsilon}
                                 \end{cases}\\
 &\overline{V}_{\varepsilon,+}(x):=\begin{cases}
                                  (\overline{B}_{+}+2\varepsilon)^2 (x-x_\xi)^2&\text{ for  }x\geq -K_{\varepsilon}\\
                                  \big((B_{max}+\varepsilon)(x+K_\varepsilon)+(\overline{B}_{+}+2\varepsilon) (-K_\varepsilon-x_\xi)\big)^2&\text{ for }x<- K_{\varepsilon}
                                 \end{cases} \\ 
 &\underline{V}_{\varepsilon,-}(x):=\begin{cases}
                                  (\underline{B}_{-}-2\varepsilon)^2 (x-x_\xi)^2&\text{ for  }x\leq K_{\varepsilon}\\
                                  \big((B_{min}-\varepsilon)(x-K_\varepsilon)+(\underline{B}_{-}-2\varepsilon) (K_\varepsilon-x_\xi)\big)^2&\text{ for }x>K_{\varepsilon}
                                 \end{cases}\\    
 &\overline{V}_{\varepsilon,-}(x):=\begin{cases}
                                  (\overline{B}_{-}+2\varepsilon)^2 (x-x_\xi)^2&\text{ for  }x\leq K_{\varepsilon}\\
                                  \big((B_{max}+\varepsilon)(x-K_\varepsilon)+(\overline{B}_{-}+2\varepsilon) (K_\varepsilon-x_\xi)\big)^2&\text{ for }x>K_{\varepsilon},
                                 \end{cases}                                 
 \end{align*}
where $B_{min}:=\min\{\underline{B}_{+},\underline{B}_{-}\}$, $B_{max}:=\max\{\overline{B}_{+},\overline{B}_{-}\}$, $K_\varepsilon>0$ is introduced below, and $x_\xi$ is the unique solution of $(\xi+A_y(x))=0$. 
(Uniqueness, for all $\xi$ with $|\xi|$ sufficiently large, is proved below, too.)
\end{lemma}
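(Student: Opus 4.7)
The plan is to convert each of the asymptotic quantities $\underline{B}_\pm$, $\overline{B}_\pm$, $\underline{W}_\pm$, $\overline{W}_\pm$ into explicit essential bounds on $B$ and $W$ outside of a large interval, to locate $x_\xi$ using the monotonicity of $A_y$ on the tails, and then to derive the piecewise estimates directly from the fundamental theorem of calculus applied to $\xi+A_y(x)=\int_{x_\xi}^{x}B(t)\,\dd t$.

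First, from the definitions of $\underline{B}_\pm$, $\overline{B}_\pm$, $\underline{W}_\pm$, $\overline{W}_\pm$ as $\sup$/$\inf$ of tail $\essinf$/$\esssup$, I would choose $K_\varepsilon>0$ so large that $\underline{B}_\pm-\varepsilon\le B(t)\le\overline{B}_\pm+\varepsilon$ and $\underline{W}_\pm-\varepsilon\le W(t)\le\overline{W}_\pm+\varepsilon$ for a.e.\ $t$ with $\pm t>K_\varepsilon$. Since $\underline{B}_\pm-\varepsilon>0$, the map $A_y$ is strictly increasing on both $(K_\varepsilon,+\infty)$ and $(-\infty,-K_\varepsilon)$, with $A_y(x)\to\pm\infty$ as $x\to\pm\infty$, while on the compact middle $[-K_\varepsilon,K_\varepsilon]$ it is continuous and therefore bounded. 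Consequently, for $\xi$ sufficiently negative $-\xi$ exceeds $\sup_{x\le K_\varepsilon}A_y(x)$, so the equation $\xi+A_y(x)=0$ has a unique solution $x_\xi\in(K_\varepsilon,+\infty)$, and $x_\xi\to+\infty$ as $\xi\to-\infty$.

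Next I would estimate $\xi+A_y(x)=\int_{x_\xi}^{x}B(t)\,\dd t$ region by region. On $\{x\ge K_\varepsilon\}$ the tail bound applies on the whole interval of integration, producing $(\underline{B}_+-\varepsilon)|x-x_\xi|\le|\xi+A_y(x)|\le(\overline{B}_++\varepsilon)|x-x_\xi|$ directly. On $\{-K_\varepsilon\le x\le K_\varepsilon\}$ I would split $\int_x^{x_\xi}=\int_x^{K_\varepsilon}+\int_{K_\varepsilon}^{x_\xi}$; the first summand is bounded in absolute value by $2\|B\|_\infty K_\varepsilon$, the second is controlled by the right-tail bound, and the resulting $O(K_\varepsilon)$ error is absorbed by relaxing $\underline{B}_+-\varepsilon$ to $\underline{B}_+-2\varepsilon$ and $\overline{B}_++\varepsilon$ to $\overline{B}_++2\varepsilon$, which is permitted once $x_\xi$—hence $|\xi|$—is sufficiently large. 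This yields the quadratic bounds $\underline{V}_{\varepsilon,+}(x)\le(\xi+A_y(x))^2\le\overline{V}_{\varepsilon,+}(x)$ on $[-K_\varepsilon,+\infty)$. For $x<-K_\varepsilon$ one further adds the integral $\int_{-K_\varepsilon}^{x}B(t)\,\dd t$, now controlled by the left-tail bound $B_{min}-\varepsilon\le B(t)\le B_{max}+\varepsilon$, which produces exactly the piecewise-affine expressions inside $\underline{V}_{\varepsilon,+}$ and $\overline{V}_{\varepsilon,+}$ there.

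Finally, adding $W$: on the right tail $W(x)\in[\underline{W}_+-\varepsilon,\overline{W}_++\varepsilon]$ and the additive inequality is immediate, whereas on the complementary compact region the discrepancy $\underline{W}_+-\varepsilon-W(x)$ is bounded by a constant depending only on $\|W\|_\infty$ and $\underline{W}_+$, and it is absorbed by taking $\xi_\varepsilon$ still more negative, since both $\underline{V}_{\varepsilon,+}(x)$ and $(\xi+A_y(x))^2$ are of order $x_\xi^2$ there. The estimate for $\xi>\tilde\xi_\varepsilon$ is obtained by mirroring the argument with $x_\xi$ now in the far left tail. The main obstacle I anticipate is the careful bookkeeping on the compact middle $[-K_\varepsilon,K_\varepsilon]$, where neither $B$ nor $W$ is controlled by tail data; the key observation that makes the argument go through is that the uncontrolled contributions are all of size $O(K_\varepsilon)$ and can thus always be absorbed into an $O(\varepsilon\,x_\xi)$ buffer by taking $|\xi|$ large enough.
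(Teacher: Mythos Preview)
Your proposal is correct and follows essentially the same approach as the paper: choose $K_\varepsilon$ from the tail definitions of $\underline{B}_\pm,\overline{B}_\pm,\underline{W}_\pm,\overline{W}_\pm$, locate $x_\xi$ via $\xi+A_y(x)=\int_{x_\xi}^{x}B(t)\,\dd t$, derive the linear bounds on $\xi+A_y$ region by region, and absorb the $O(K_\varepsilon)$ errors on the compact middle into the $\varepsilon$--slack by sending $x_\xi\to+\infty$. The paper only proves the first inequality in \eqref{eq:pot_est} in detail and handles the absorption step via the observation $\lim_{\xi\to-\infty}\big((\xi+A_y(K_\varepsilon))-(\underline{B}_{+}-2\varepsilon)(K_\varepsilon-x_\xi)\big)=-\infty$, which is exactly your ``$O(\varepsilon\,x_\xi)$ buffer'' phrased differently.
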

\begin{proof}
 We will only prove the first inequality in \eqref{eq:pot_est}. The remaining inequalities may be deduced in a similar manner.
 
 Since $\underline{B}_{\pm}>0$, $B>\varepsilon$ almost everywhere outside a compact subset of $\R$. Moreover, $A_y$ is absolutely continuous and $A'_{y}=B$ (a.e. $x$). In particular, we have $\lim_{x\to\pm\infty}A_y(x)=\pm\infty$ and $(\xi+A_y(x))=0$ has unique solution for all $\xi$ with $|\xi|$ sufficiently large. Let us denote this solution by $x_\xi$. Clearly, $\lim_{\xi\to -\infty}x_\xi=+\infty$.
 
 For a given $\varepsilon$, there exists $K_\varepsilon>0$ such that almost everywhere on $(K_\varepsilon,+\infty)$,
 \begin{equation*}
  \underline{B}_{+}-\varepsilon<B<\overline{B}_{+}+\varepsilon,\quad \underline{W}_{+}-\varepsilon<W<\overline{W}_{+}+\varepsilon,
 \end{equation*}
 and almost everywhere on $(-\infty,-K_\varepsilon)$,
 \begin{equation*}
  \underline{B}_{-}-\varepsilon<B<\overline{B}_{-}+\varepsilon,\quad \underline{W}_{-}-\varepsilon<W<\overline{W}_{-}+\varepsilon.
 \end{equation*}
 Let us stress that the choice of $K_\varepsilon$ depends solely on $\varepsilon$. Now we restrict ourselves to $\xi$ sufficiently negative so that $x_\xi>K_\varepsilon$, and we estimate
 \begin{align}
  & \xi+A_y(x)>(\underline{B}_{+}-2\varepsilon)(x-x_\xi)\geq 0 &\text{ for }x\geq x_\xi \label{eq:est1}\\
  & \xi+A_y(x)<(\underline{B}_{+}-2\varepsilon)(x-x_\xi)<0 &\text{ for }x\in(K_\varepsilon,x_\xi) \label{eq:est2}\\
  & W(x)>\underline{W}_{+}-\varepsilon &\text{ for }x\geq K_\varepsilon.\label{eq:est3}
 \end{align}
 
 We also have
 $$\sup_{x\in(-K_\varepsilon,K_\varepsilon)}|A_y(x)-A_y(K_\varepsilon)|\leq2K_\varepsilon\|B\|_\infty,\quad \esssup_{x\in(-\infty,K_\varepsilon)}|W(x)-(\underline{W}_{+}-\varepsilon)|<+\infty.$$
 Using these estimates together with the following observation,
 \begin{equation*}
  \lim_{\xi\to -\infty}\big( (\xi+A_y(K))-(\underline{B}_{+}-2\varepsilon)(K_\varepsilon-x_\xi)\big)=-\infty,
 \end{equation*}
 we infer that, for all sufficiently negative $\xi$, not only
 $\xi+A_y(x)<(\underline{B}_{+}-2\varepsilon)(x-x_\xi)<0$ 
 but
 \begin{equation}\label{eq:est4}
  (\underline{B}_{+}-2\varepsilon)^2(x-x_\xi)^2+\underline{W}_{+}-\varepsilon<(\xi+A_y(x))^2+W(x)
 \end{equation}
 on $(-K_\varepsilon,K_\varepsilon)$. Finally, by a similar reasoning, there exists $\xi_\varepsilon$ such that for all $\xi<\xi_\varepsilon$, $x_\xi>K_\varepsilon$ and \eqref{eq:est4} holds together with
 \begin{equation}\label{eq:est5}
 \big((\min\{\underline{B}_{+},\underline{B}_{-}\}-\varepsilon)(x+K_\varepsilon)+(\underline{B}_{+}-2\varepsilon) (-K_\varepsilon-x_\xi)\big)^2+\underline{W}_{+}-\varepsilon<(\xi+A_y(x))^2+W(x)
 \end{equation}
on $(-\infty, -K_\varepsilon)$.
 
 Putting \eqref{eq:est1}, \eqref{eq:est2}, \eqref{eq:est3}, \eqref{eq:est4}, and \eqref{eq:est5} together we arrive at
 $$\underline{V}_{\varepsilon,+}(x)+\underline{W}_{+}-\varepsilon\leq (\xi+A_y(x))^2 +W(x)\quad\text{ for }x\in\R.$$
\end{proof}

\subsubsection{Abstract convergence result}

\begin{theorem}\label{theo:abstract}
 Let $\{A[\alpha],\, \alpha\in(-\infty,+\infty]\}$ be a one parametric family of lower--bounded selfadjoint operators on $L^2(\Omega)$, where $\Omega\subset\R^n$ is open, with the following properties
 \begin{enumerate}[\upshape (i)]
  \item \label{prop:core} $C_{0}^{\infty}(\Omega)$ is a core of $A[\alpha]$ for all $\alpha\in(-\infty,+\infty]$.
  \item \label{prop:low_bound} There exist $C>0$ and $K,\,\alpha_0\in\R$ such that, for all $\alpha\geq\alpha_0$, $C A[+\infty]+K\leq A[\alpha]$.
  \item \label{prop:conv} For any compact set $\mathcal{K}\subset\Omega$, there exists $\alpha_{\mathcal{K}}$ such that, for all $\alpha\geq\alpha_{\mathcal{K}}$, $A[\alpha]|_{C_{0}^{\infty}(\mathcal{K})}=A[+\infty]|_{C_{0}^{\infty}(\mathcal{K})}$.
  \item \label{prop:comp} $A[+\infty]$ has compact resolvent.
 \end{enumerate}
 Then, for any $z\in\mathrm{Res}(A[+\infty])$ and $\varepsilon>0$, there exists $\alpha_{z,\varepsilon}$ such that for all $\alpha>\alpha_{z,\varepsilon}$, $z\in\mathrm{Res}(A[\alpha])$ and
 $$\|(A[\alpha]-z)^{-1}-(A[+\infty]-z)^{-1}\|<\varepsilon.$$
\end{theorem}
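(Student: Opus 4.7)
My plan is to prove norm convergence of resolvents via a compactness argument. I first specialize $z$ to a real number sufficiently negative that both $(A[\alpha]-z)^{-1}$ and $(A[+\infty]-z)^{-1}$ exist with a uniform operator bound; I establish norm convergence at this $z$ and then extend to arbitrary $z\in\mathrm{Res}(A[+\infty])$ via the first resolvent identity. The heart of the argument is: for any $\alpha_n\to+\infty$ and any $\phi_n$ in the $L^2$--unit ball, every subsequential strong limit of $u_n:=(A[\alpha_n]-z)^{-1}\phi_n$ must coincide with $(A[+\infty]-z)^{-1}\phi$, where $\phi_n\rightharpoonup\phi$.

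Setting $m:=\inf\sigma(A[+\infty])$, property (ii) gives $A[\alpha]\geq Cm+K$ for $\alpha\geq\alpha_0$. For any real $z<\min\{m,Cm+K\}$ the bound $\|(A[\alpha]-z)^{-1}\|\leq 1/(Cm+K-z)$ holds uniformly for $\alpha\geq\alpha_0$, and a fortiori $\|u_n\|$ stays bounded. The identity $\langle A[\alpha_n]u_n,u_n\rangle=\langle\phi_n,u_n\rangle+z\|u_n\|^2$ then shows $\langle A[\alpha_n]u_n,u_n\rangle$ is bounded. Since the form inequality in (ii) entails $\dom{A[\alpha_n]^{1/2}}\subseteq\dom{A[+\infty]^{1/2}}$ together with $\langle A[+\infty]u_n,u_n\rangle\leq C^{-1}(\langle A[\alpha_n]u_n,u_n\rangle-K\|u_n\|^2)$, the sequence $u_n$ is bounded in the form domain of $A[+\infty]$.

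By property (iv), the embedding $\dom{A[+\infty]^{1/2}}\hookrightarrow L^2(\Omega)$ is compact, so along a subsequence $u_n\to u$ strongly in $L^2(\Omega)$; by Banach--Alaoglu I may also assume $\phi_n\rightharpoonup\phi$ weakly. To identify $u$, take any $\psi\in C_0^\infty(\Omega)$ and set $\mathcal{K}:=\supp\psi$; for $n$ so large that $\alpha_n\geq\alpha_{\mathcal{K}}$, property (iii) yields $A[\alpha_n]\psi=A[+\infty]\psi$, whence by self--adjointness
\[
 \langle u_n,(A[+\infty]-z)\psi\rangle=\langle(A[\alpha_n]-z)u_n,\psi\rangle=\langle\phi_n,\psi\rangle.
\]
Letting $n\to\infty$ yields $\langle u,(A[+\infty]-z)\psi\rangle=\langle\phi,\psi\rangle$ for every $\psi\in C_0^\infty(\Omega)$. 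By property (i) (the core property), this forces $u\in\dom{A[+\infty]}$ and $(A[+\infty]-z)u=\phi$, i.e., $u=(A[+\infty]-z)^{-1}\phi$.

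Finally, compactness of $(A[+\infty]-z)^{-1}$ sends the weakly convergent $\phi_n$ to a strongly convergent sequence with limit $(A[+\infty]-z)^{-1}\phi$, and combining with $u_n\to(A[+\infty]-z)^{-1}\phi$ gives $\bigl((A[\alpha_n]-z)^{-1}-(A[+\infty]-z)^{-1}\bigr)\phi_n\to 0$ along the subsequence. Because every subsequence of the original $(\phi_n,\alpha_n)$ admits such a further subsequence, the whole sequence obeys the same convergence, which is precisely $\|(A[\alpha]-z)^{-1}-(A[+\infty]-z)^{-1}\|\to 0$ at the chosen $z$. Norm convergence for an arbitrary $z\in\mathrm{Res}(A[+\infty])$, together with the assertion $z\in\mathrm{Res}(A[\alpha])$ for $\alpha$ large, then follows from the case of the chosen $z$ by the first resolvent identity and a standard Neumann--series/analytic--continuation argument. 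I expect the main obstacle to be the clean setup of the form--domain bound on $u_n$ in the second paragraph; once that is secured, the compact embedding and core--property arguments run smoothly.
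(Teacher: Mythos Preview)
Your proof is correct and follows essentially the same strategy as the paper's main proof in Section~\ref{sec:abstract_result}: use hypothesis~\eqref{prop:low_bound} to bound $u_n=(A[\alpha_n]-z)^{-1}\phi_n$ in the form domain of $A[+\infty]$, extract a convergent subsequence via the compactness in~\eqref{prop:comp}, identify the limit by testing against $C_0^\infty$ using~\eqref{prop:conv} and the core property~\eqref{prop:core}, and conclude norm--resolvent convergence by a subsequence argument (the paper packages this last step as Lemma~\ref{lem:CaKr}). Your version is slightly more streamlined in that you invoke the form--domain inclusion $Q(A[\alpha])\subset Q(A[+\infty])$ and the compact embedding $Q(A[+\infty])\hookrightarrow L^2$ directly, whereas the paper first approximates $u_n$ by $C_0^\infty$ elements; both routes are valid and the underlying mechanism is identical.
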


The proof is given separately in Section \ref{sec:abstract_result}.

\subsubsection{Comparison operators}\label{sec:comparison}
 Let $\omega,\, \tilde{\omega}>0$ and $x_0,\alpha\in\R$. The following differential operators on $L^2(\R)$
 \begin{align*}
  &\dot{H}_{\omega,\tilde{\omega}}[\alpha]:=\begin{cases}
                                   -d_{x}^{2}+\omega^2 (x-\alpha)^2 &\text{ for } x\geq x_0\\
                                   -d_{x}^{2}+\left(\tilde{\omega}(x-x_0)+\omega(x_0-\alpha)\right)^2 &\text{ for } x<x_0
                                   \end{cases}\\
  &\dot{H}_\omega[\alpha]= -d_{x}^{2}+\omega^2 (x-\alpha)^2
 \end{align*}
  defined on $C_{0}^{\infty}(\R)$ are essentially selfadjoint. We will denote their closures by $H_{\omega,\tilde{\omega}}[\alpha]$ and $H_{\omega}[\alpha]$, respectively. Let us introduce a unitary transform $U_\alpha:\ \psi(x)\mapsto\psi(x-\alpha)$. Then
  $$H_{\omega,\tilde{\omega}}[\alpha]=U_\alpha \tilde{H}_{\omega,\tilde{\omega}}[\alpha]U_{\alpha}^{*},\quad H_{\omega}[\alpha]=U_\alpha \tilde{H}_{\omega}U_{\alpha}^{*}$$
  with 
 \begin{align*}
  &\tilde{H}_{\omega,\tilde{\omega}}[\alpha]:=\begin{cases}
                                   -d_{x}^{2}+\omega^2 x^2 &\text{ for } x\geq x_0-\alpha\\
                                   -d_{x}^{2}+\left(\tilde{\omega}(x+\alpha-x_0)+\omega(x_0-\alpha)\right)^2 &\text{ for } x<x_0-\alpha
                                   \end{cases}\\
  &\tilde{H}_\omega= -d_{x}^{2}+\omega^2 x^2.
 \end{align*} 
 Remark that $\tilde{H}_\omega$ is just the harmonic oscillator Hamiltonian whose spectrum is very well known to be formed only by simple eigenvalues $(2n-1)\omega,\, n\in\N$. Due to unitary equivalence, $\sigma(H_{\omega}[\alpha])=\sigma(\tilde{H}_\omega)$.

 If we set $\tilde{H}_{\omega,\tilde{\omega}}[+\infty]\equiv\tilde{H}_\omega$, then the family $\{\tilde{H}_{\omega,\tilde{\omega}}[\alpha],\ \alpha\in(-\infty,+\infty]\}$ fulfills the assumptions of Theorem \ref{theo:abstract}. In particular, for all $\alpha>x_0$,
 $$\min\left\{1,\frac{\tilde{\omega}}{\omega}\right\}^2\tilde{H}_\omega\leq \tilde{H}_{\omega,\tilde{\omega}}[\alpha],$$
 and so the operator family obeys \eqref{prop:low_bound} of the theorem. (Remark that if $\omega \tilde{\omega}<0$, then \eqref{prop:low_bound} would not be fulfilled.) Therefore, for any $\mu\in\mathrm{Res}(\tilde{H}_\omega)$, we have 
 \begin{equation*}
  \lim_{\alpha\to +\infty}\|(\tilde{H}_{\omega,\tilde{\omega}}[\alpha]+\mu)^{-1}-(\tilde{H}_\omega+\mu)^{-1}\|=0.
 \end{equation*}
 Due to unitarity of $U_\alpha$ we also have
 \begin{equation*}
  \lim_{\alpha\to +\infty}\|(H_{\omega,\tilde{\omega}}[\alpha]+\mu)^{-1}-(H_\omega[\alpha]+\mu)^{-1}\|=0.
 \end{equation*}
 Since the norm-resolvent convergence implies the convergence of eigenvalues \cite{kato}, this yields
 \begin{proposition}\label{prop:spec_conv}
  Let $\sigma_n,\ n\in\N,$ be the $n$th eigenvalue of $H_{\omega,\tilde{\omega}}[\alpha]$, then 
  $$\lim_{\alpha\to +\infty}\sigma_n=(2n-1)\omega.$$
 \end{proposition}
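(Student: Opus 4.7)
The proposition is essentially an immediate consequence of the norm-resolvent convergence just derived, so my plan is a short three-step deduction.

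First, fix $\mu>0$ sufficiently large so that $-\mu\in\mathrm{Res}(\tilde H_\omega)$ (any $\mu\geq 0$ works, since $\tilde H_\omega\geq\omega$) and so that, by property \eqref{prop:low_bound} of Theorem \ref{theo:abstract} applied to the family $\tilde H_{\omega,\tilde\omega}[\alpha]$ together with the unitary equivalence $H_{\omega,\tilde\omega}[\alpha]=U_\alpha\tilde H_{\omega,\tilde\omega}[\alpha]U_\alpha^*$, one also has $H_\omega[\alpha]+\mu>0$ and $H_{\omega,\tilde\omega}[\alpha]+\mu>0$ for all $\alpha$ large enough. Consequently, both $(H_\omega[\alpha]+\mu)^{-1}$ and $(H_{\omega,\tilde\omega}[\alpha]+\mu)^{-1}$ are bounded, positive, compact, self-adjoint operators on $L^2(\R)$. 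By the unitary equivalence $H_\omega[\alpha]=U_\alpha\tilde H_\omega U_\alpha^*$, the $n$th largest eigenvalue of $(H_\omega[\alpha]+\mu)^{-1}$ equals $((2n-1)\omega+\mu)^{-1}$ for every $\alpha$.

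Second, invoke the norm-resolvent convergence that was already established immediately above the proposition,
$$\lim_{\alpha\to+\infty}\bigl\|(H_{\omega,\tilde\omega}[\alpha]+\mu)^{-1}-(H_\omega[\alpha]+\mu)^{-1}\bigr\|=0.$$
For bounded self-adjoint operators $T,S$, the minimax principle yields the Weyl-type stability $|\mu_n(T)-\mu_n(S)|\leq\|T-S\|$, where $\mu_n$ denotes the $n$th largest eigenvalue (counted with multiplicity). Applied to $T=(H_{\omega,\tilde\omega}[\alpha]+\mu)^{-1}$ and $S=(H_\omega[\alpha]+\mu)^{-1}$, this gives $(\sigma_n+\mu)^{-1}\to((2n-1)\omega+\mu)^{-1}$, and since $t\mapsto t^{-1}-\mu$ is continuous on $(0,+\infty)$, the claim $\sigma_n\to(2n-1)\omega$ follows.

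There is no real obstacle here: the only point that deserves to be made explicit in the write-up is the $1$-Lipschitz dependence of the ordered eigenvalues on the compact self-adjoint operator in the operator norm, which is the standard mechanism converting norm-resolvent convergence into convergence of individual eigenvalues for operators with purely discrete spectrum (this is precisely the fact cited via \cite{kato} in the sentence preceding the proposition statement).
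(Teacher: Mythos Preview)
Your argument is correct and follows exactly the same route as the paper: the paper simply records that norm-resolvent convergence implies convergence of eigenvalues (citing \cite{kato}), and you have spelled out this implication via the Weyl-type Lipschitz bound $|\mu_n(T)-\mu_n(S)|\leq\|T-S\|$ for the compact self-adjoint resolvents. There is nothing to add.
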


\subsection{Proof of Theorem \ref{theo:main}}

 Let $\underline{H}_{\varepsilon,\pm}[\xi]$ and $\overline{H}_{\varepsilon,\pm}[\xi]$ be closures of $\dot{\underline{H}}_{\varepsilon,\pm}[\xi]$ and $\dot{\overline{H}}_{\varepsilon,\pm}[\xi]$, respectively, that are defined on $C_{0}^{\infty}(\R)$ by
 \begin{align*}
  &\dot{\underline{H}}_{\varepsilon,\pm}[\xi]=-d_{x}^{2}+\underline{V}_{\varepsilon,\pm}+\underline{W}_{\pm}-\varepsilon\\
  &\dot{\overline{H}}_{\varepsilon,\pm}[\xi]=-d_{x}^{2}+\overline{V}_{\varepsilon,\pm}+\overline{W}_{\pm}+\varepsilon.
 \end{align*}
 Then $\underline{H}_{\varepsilon,\pm}[\xi],\ \overline{H}_{\varepsilon,\pm}[\xi]$ are selfadjoint and have the structure of the comparison operator of the subsection \ref{sec:comparison} with $x_\xi$ being the free parameter instead of $\alpha$. (For the case $\xi\to+\infty$, we have $x_\xi\to -\infty$, and so the results of the subsection \ref{sec:comparison} must be modified in an obvious manner.)
 
 By Lemma \ref{lem:pot_est}, for all $\xi<\xi_\varepsilon$,
 $$\underline{H}_{\varepsilon,+}[\xi]\leq H[\xi]\leq \overline{H}_{\varepsilon,+}[\xi],$$
 and for all $\xi>\tilde{\xi}_\varepsilon$,
 $$\underline{H}_{\varepsilon,-}[\xi]\leq H[\xi]\leq \overline{H}_{\varepsilon,-}[\xi].$$
 If we now apply the minimax principle together with Proposition \ref{prop:spec_conv}, we obtain
 \begin{align*}
  &(\underline{B}_{\pm}-2\varepsilon)(2n-1)+\underline{W}_{\pm}-\varepsilon\leq\liminf_{\xi\to \mp\infty}\lambda_n[\xi]\\
  & \limsup_{\xi\to \mp\infty}\lambda_n[\xi]\leq (\overline{B}_{\pm}+2\varepsilon)(2n-1)+\overline{W}_{\pm}+\varepsilon.
 \end{align*}
 Since $\varepsilon$ may be arbitrarily small,
 \begin{align*}
  &\underline{B}_{\pm}(2n-1)+\underline{W}_{\pm}\leq\liminf_{\xi\to \mp\infty}\lambda_n[\xi]\\
  & \limsup_{\xi\to \mp\infty}\lambda_n[\xi]\leq \overline{B}_{\pm}(2n-1)+\overline{W}_{\pm}.
 \end{align*}
 
 Therefore if, for all $n\in\N$, either
 $$\overline{B}_{-}(2n-1)+\overline{W}_{-}<\underline{B}_{+}(2n-1)+\underline{W}_{+}$$
 or
 $$\overline{B}_{+}(2n-1)+\overline{W}_{+}<\underline{B}_{-}(2n-1)+\underline{W}_{-},$$
 then every $\lambda_n[\xi]$ is non-constant in $\xi$-variable. This  may rewritten as \eqref{eq:ac_cond}.

\section{example--effective Hamiltonian for a thin curved quantum layer in homogeneous magnetic field}\label{sec:example}
The quantum layers are important representatives of the so-called quantum wave\-guides that have been extensively studied over last several decades. See a recent monograph \cite{ExKo} for an immense list of references. The quantum waveguides in magnetic field, that will be of our particular interest, were examined, e.g., in \cite{KrRaTu_15, Gr_08,KrRa_14,BeOlVe_14,EkKo_05,Ol_14,BrRaSo_08}. In this section, we will derive a sufficient condition for the absolute continuity of the effective Hamiltonian for a very thin curved quantum layer in an ambient homogeneous magnetic field.

Let $\Sigma$ be a $y$-translation invariant surface in $\R^3$ given by the following parametrization:
$$\mathscr{L}_{0}(s,y)=(x(s),y,z(s))$$
with $s,y\in\R$. Here the functions $x$ and $z$ are assumed to be $C^{3}$-smooth and  such that $\dot x(s)^2+\dot z(s)^2=1$. The latter condition  means that the curve $\Gamma:s\mapsto(x(s),z(s))$ in $xz$ plane is parametrized by arc length measured from some reference point on the curve. Therefore the curvature $\kappa$ of $\Gamma$ is given by
$$\kappa(s)^2=\ddot x(s)^2+\ddot z(s)^2$$ 
and a unit normal vector to $\Sigma$ may be chosen as follows,
$$n(s,y)\equiv n(s)=(-\dot z(s),0,\dot x(s)).$$
If we view $\Sigma$ as a Riemannian manifold then the metric induced by the immersion $\mathscr{L}_{0}$ reads
\begin{equation*}
 (g_{\mu\nu})=\begin{pmatrix}
               1 & 0\\
               0 & 1
              \end{pmatrix}.
\end{equation*}

Let $a>0$ and $I:=(-1,1)$. Define a layer $\Omega$ of the width $2a$ constructed along $\Sigma$ as the image of
$$\mathscr{L}:\R^2\times I\to\R^3:\ \left\{ (s,y,u)\mapsto \mathscr{L}_{0}(s,y)+a u n(s)\right\}.$$
We always assume  $a<\|\kappa\|_{\infty}^{-1}$ and that $\Omega$ does not intersect itself.
Under these conditions, $\mathscr{L}$ is a diffeomorphism onto $\Omega$ as one can see, e.g., from the formula for the metric $G$ (induced by the immersion $\mathscr{L}$) on $\Omega$ that reads
\begin{equation*}
 (G_{ij})=\begin{pmatrix}
           (G_{\mu\nu})& 0\\
           0 & a^2
          \end{pmatrix},\quad
 (G_{\mu\nu})=\begin{pmatrix}
               f_a(s,u)^2 & 0\\
               0 & 1
              \end{pmatrix},
\end{equation*}
where $f_a(s,u):=(1-a u\kappa(s))$.

We start with the magnetic Laplacian on $\Omega$ subject to the Dirichlet boundary condition,
$$-\Delta_{D,A}^{\Omega}=(-i\nabla+A)^2\text{ (in the form sense)},\quad Q(-\Delta_{D,A}^{\Omega})=\mathcal{H}_{A,0}^{1}(\Omega,\dd x\dd y\dd z),$$
with a special choice of the vector potential, $A=B_{0}(0,x,0),\ B_{0}>0$, that corresponds to the magnetic field $B=(0,0,B_{0})$.  Employing the diffeomorphism $\mathscr{L}$, we may identify $-\Delta_{D,A}^{\Omega}$ with a selfadjoint operator $\hat{H}$ on $L^{2}(\R^2\times I,\dd\Omega)$ with the following action (understood in the form sense)
\begin{equation*}
\hat{H}_\Omega=-f_a(s,u)^{-1}\partial_{s}f_a(s,u)^{-1}\partial_{s}+(-i\partial_{y}+\tilde{A}_{2}(s,u))^2-a^{-2}f_a(s,u)^{-1}\partial_{u}f_a(s,u)\partial_{u},
\end{equation*}
where $\tilde{A}=(D\mathscr{L})^T A\circ\mathscr{L}=(0,\tilde{A}_{2},0)$ with
$\tilde{A}_{2}(s,u)=B_{0}\big(x(s)-a u\dot z(s)\big).$

Involving a unitary transform $U:L^{2}(\R^2\times I,\dd\Omega)\to L^{2}(\R^2\times I,\dd \Sigma\dd u)$, $\psi\mapsto a^{1/2}f_{a}^{1/2}\psi$ we arrive at a unitarily equivalent operator defined again in the form sense as
\begin{equation*}
 \tilde{H}_\Omega=U\hat{H}_\Omega U^{-1}=-\partial_{s}f_a(s,u)^{-2}\partial_{s}+(-i\partial_{y}+\tilde{A}_{2}(s,u))^2-a^{-2}\partial_{u}^{2}+V(s,u)
\end{equation*}
where 
$$V(s,u)=-\frac{1}{4}\frac{\kappa(s)^2}{f_a(s,u)^2}-\frac{1}{2}\frac{a u\ddot\kappa(s)}{f_a(s,u)^3}-\frac{5}{4}\frac{a^2 u^2\dot\kappa(s)^2}{f_a(s,u)^4}.$$
(We have to strengthen our regularity assumptions on $\Sigma$ to give a meaning to the second derivative of the curvature.)

It was proved in \cite{KrRaTu_15} that, for all $k$ large enough,
$$\|\big(\tilde{H}_\Omega-(\pi/2a)^2+k\big)^{-1}-(h_{\mathrm{eff}}+k)^{-1}\oplus 0\|=\mathcal{O}(a)$$
as $a\to 0_{+}$, with 
\begin{equation*}
 h_{\mathrm{eff}}=-\partial_{s}^{2}+(-i\partial_{y}+B_{0}x(s))^2-\frac{1}{4}\kappa^{2}(s)
\end{equation*}
acting on $L^2(\R^2,\dd s\dd y)$.
If we assume that $\kappa$ is bounded, then $h_{\mathrm{eff}}$ is essentially selfadjoint on $C_{0}^{\infty}(\R^2)$.

Clearly, $h_{\mathrm{eff}}$ is of the form \eqref{eq:ham_def}. Theorem \ref{theo:main}  yields immediately 
\begin{proposition}
$h_{\mathrm{eff}}$ is purely absolutely continuous if $\kappa\in L^\infty$ and either 
$$\underline{\dot{x}}_{\pm}>0\,\wedge\,\underline{\dot{x}}_{+}\geq\overline{\dot{x}}_{-}\,\wedge\,\left(\underline{\kappa^2}_{+}-\overline{\kappa^2}_{-}<4 B_0(\underline{\dot{x}}_{+}-\overline{\dot{x}}_{-})\right)$$
or
$$\underline{\dot{x}}_+>0\,\wedge\, \overline{\dot{x}}_-<0.$$
(The claim remains valid if we interchange the $\pm$ indices everywhere.)
In particular, it is purely absolutely continuous if
$$\lim_{s\to\pm\infty}\kappa(s)=0\ \text{ and } \lim_{s\to\infty}\dot{x}(s)\neq \lim_{s\to-\infty}\dot{x}(s).$$
\end{proposition}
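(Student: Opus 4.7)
The proof is essentially a dictionary that puts $h_{\mathrm{eff}}$ in the form~\eqref{eq:ham_def} and then applies Theorem~\ref{theo:main}. With the identifications
$$B(s):=B_0\,\dot{x}(s),\qquad W(s):=-\tfrac{1}{4}\,\kappa^2(s),$$
both $B$ and $W$ lie in $L^\infty(\R;\R)$ (since $|\dot{x}|\leq 1$ and $\kappa\in L^\infty$), and the vector potential generated by~\eqref{eq:vec_pot} is $A_y(s)=B_0\bigl(x(s)-x(0)\bigr)$. This differs from the $B_0 x(s)$ appearing in $h_{\mathrm{eff}}$ only by the constant $-B_0 x(0)$, a pure gauge that may be removed either by the unitary $\psi(s,y)\mapsto\ee^{-iB_0 x(0)y}\psi(s,y)$ or by a harmless shift of the arc--length origin. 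Hence $h_{\mathrm{eff}}$ is unitarily equivalent to the operator $H$ of~\eqref{eq:ham_def} for the above choice of $B$ and $W$, so Theorem~\ref{theo:main} is applicable.

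Next I would evaluate the four extremal quantities entering the hypothesis of that theorem. Because $B_0>0$ commutes with the essential sup/inf operations,
$$\underline{B}_{\pm}=B_0\,\underline{\dot{x}}_{\pm},\qquad \overline{B}_{\pm}=B_0\,\overline{\dot{x}}_{\pm},$$
while the minus sign in $W=-\kappa^2/4$ interchanges the roles of essential sup and essential inf, giving
$$\underline{W}_{\pm}=-\tfrac{1}{4}\,\overline{\kappa^2}_{\pm},\qquad \overline{W}_{\pm}=-\tfrac{1}{4}\,\underline{\kappa^2}_{\pm}.$$
Substituting these identities into~\eqref{eq:ac_cond} and~\eqref{eq:ac_cond_alt}, and invoking the $\pm$-swap symmetry permitted by the theorem, produces precisely the two sufficient conditions stated in the proposition.

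For the ``in particular'' clause, vanishing of $\kappa$ at $\pm\infty$ forces $\underline{\kappa^2}_{\pm}=\overline{\kappa^2}_{\pm}=0$, so all $W$-quantities are zero and the third inequality in~\eqref{eq:ac_cond} collapses to $\underline{\dot{x}}_{+}>\overline{\dot{x}}_{-}$, which already implies its non--strict predecessor. When the one--sided limits $\ell_\pm:=\lim_{s\to\pm\infty}\dot{x}(s)$ exist and differ, all four $\dot{x}$-decorations collapse to $\ell_\pm$; after a possible $\pm$-swap one may assume $\ell_+>\ell_-$, and depending on whether both limits share a sign or have opposite signs one falls into~\eqref{eq:ac_cond} or~\eqref{eq:ac_cond_alt} respectively. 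The whole argument is essentially bookkeeping, with no deep obstacle; the only delicate points are the inf/sup flip produced by the sign in front of $\kappa^2$ and the consistent matching of $\pm$-indices on the two sides of each identity.
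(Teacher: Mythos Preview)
Your approach coincides with the paper's: its entire proof is the sentence ``Theorem~\ref{theo:main} yields immediately'', and you have correctly spelled out the underlying dictionary $B=B_0\dot x$, $W=-\kappa^2/4$ together with the gauge remark.

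Two minor points are worth flagging. First, if you carry your own identities $\underline{W}_{\pm}=-\tfrac14\overline{\kappa^2}_{\pm}$, $\overline{W}_{\pm}=-\tfrac14\underline{\kappa^2}_{\pm}$ through~\eqref{eq:ac_cond}, the strict inequality you obtain is
\[
\overline{\kappa^2}_{+}-\underline{\kappa^2}_{-}<4B_0(\underline{\dot x}_{+}-\overline{\dot x}_{-}),
\]
with the over/underlines on $\kappa^2$ transposed relative to the printed proposition; this is a typographical slip in the statement, not a flaw in your argument, so ``produces precisely'' is a slight overstatement. Second, in the ``in particular'' clause your case split (both $\ell_\pm$ positive versus opposite signs) omits the possibility that both $\ell_\pm$ are negative --- which one reduces to the positive case by the antiunitary $\psi\mapsto\bar\psi$, sending $B\to -B$ --- and the borderline case where one of the limits vanishes, which the displayed sufficient conditions (and Theorem~\ref{theo:main} itself) do not actually cover.
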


\section{Appendix: Proof of Theorem \ref{theo:abstract}}\label{sec:abstract_result}

For our proof we will need two auxiliary results. The first one is due to C.~Cazacu and D.~Krej\v{c}i\v{r}\'{i}k \cite{CaKr_16}. We present it here in a refined form.

\begin{lemma}[Cazacu, Krej\v{c}i\v{r}\'{i}k \cite{CaKr_16}]\label{lem:CaKr}
 Let $\{R_d\}_{d\in\R}$ be a family of bounded operators and $R$ be a compact operator on some Hilbert space.
 If for all sequences $(f_n)$ with properties $\|f_n\|=1$ and $f_n\xrightarrow[n\to+\infty]{w}f$, any real sequence $(d_n)$ such that $\lim_{n\to+\infty} d_n=+\infty$, and any $\varepsilon>0$, there exists a subsequence $(n_k)$  such that $\lim_{k\to+\infty}\|R_{d_{n_k}}f_{n_k}-Rf\|<\varepsilon$, then $\lim_{d\to+\infty}\|R_d-R\|=0$.
\end{lemma}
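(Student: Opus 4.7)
The plan is to argue by contradiction. Suppose that $\|R_d - R\|$ does not tend to $0$ as $d\to+\infty$; then there exist $\varepsilon_0>0$ and a sequence $d_n\to+\infty$ with $\|R_{d_n}-R\|\geq 2\varepsilon_0$, and by the definition of the operator norm one can select unit vectors $f_n$ for which $\|R_{d_n}f_n - R f_n\|>\varepsilon_0$. This converts the failure of operator-norm convergence into a concrete pair of sequences $(f_n),(d_n)$ to which the hypothesis of the lemma can be applied.

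To apply the hypothesis I first extract a weakly convergent subsequence of $(f_n)$. Since the closed unit ball of a Hilbert space is weakly sequentially compact (Banach--Alaoglu together with metrizability on bounded sets in the separable case, or Eberlein--\v{S}mulian in general), $(f_n)$ admits such a subsequence; after passing to it and relabeling, assume $f_n\rightharpoonup f$ while the lower bound $\|R_{d_n}f_n-Rf_n\|>\varepsilon_0$ persists. Feeding $(f_n)$, $(d_n)$, and $\varepsilon=\varepsilon_0/3$ into the hypothesis produces a further subsequence $(n_k)$ with
$$\lim_{k\to+\infty}\|R_{d_{n_k}} f_{n_k} - R f\|<\frac{\varepsilon_0}{3}.$$

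The compactness of $R$ closes the argument. Weak convergence $f_{n_k}\rightharpoonup f$ combined with compactness of $R$ yields strong convergence $Rf_{n_k}\to Rf$, so for all sufficiently large $k$ one has both $\|R_{d_{n_k}}f_{n_k}-Rf\|<\varepsilon_0/3$ and $\|Rf-Rf_{n_k}\|<\varepsilon_0/3$. The triangle inequality then gives
$$\varepsilon_0<\|R_{d_{n_k}}f_{n_k}-Rf_{n_k}\|\leq\|R_{d_{n_k}}f_{n_k}-Rf\|+\|Rf-Rf_{n_k}\|<\frac{2\varepsilon_0}{3},$$
the desired contradiction.

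I expect the chief delicacy to lie not in any single analytic step but in the bookkeeping of the two nested subsequences: one must check that $d_{n_k}\to+\infty$ along the final subsequence (automatic, as subsequences of divergent sequences diverge) and that the hypothesis is invoked with correctly matched indices. A minor technical caveat is that weak sequential compactness of the unit ball is automatic for separable Hilbert spaces but requires Eberlein--\v{S}mulian otherwise; the lemma is presumably used here in a separable context.
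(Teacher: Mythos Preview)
The paper does not supply its own proof of this lemma; it is stated as a quoted result from \cite{CaKr_16} (``in a refined form'') and used as a black box in the proof of Theorem~\ref{theo:abstract}. So there is nothing to compare against directly.

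Your argument is correct and is essentially the natural proof. The contradiction setup, the extraction of a weakly convergent subsequence from the unit vectors $f_n$, the invocation of the hypothesis on the resulting pair $(f_n),(d_n)$, and the use of compactness of $R$ to upgrade $f_{n_k}\rightharpoonup f$ to $Rf_{n_k}\to Rf$ are all in order, and the final triangle-inequality contradiction is clean.

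Two cosmetic remarks. First, your caution about weak sequential compactness in the non-separable case is unnecessary for Hilbert spaces: any bounded sequence lives in the separable closed span of its terms, so one can always pass to a weakly convergent subsequence without invoking Eberlein--\v{S}mulian. Second, the constant $\varepsilon_0/3$ is more frugal than needed; $\varepsilon_0/2$ would already do, since the hypothesis gives a limit strictly below the chosen $\varepsilon$ and hence the terms themselves are eventually below it. Neither point affects correctness.
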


\begin{lemma}\label{lem:abstract}
 Let $\{A[\alpha]\}$ be as in Theorem \ref{theo:abstract} and $\mu$ be such that $(A[\alpha]+\mu)\geq 1$ for all $\alpha\in[\alpha_0,+\infty]$. ($\mu$ with this property exists, due to \eqref{prop:low_bound} and semiboundness of $A[+\infty]$.)
 Then for any sequence of functions $(f_n)$ such that $\|f_n\|=1$ and $f_n\xrightarrow[n\to+\infty]{w}f$, any real sequence $(d_n)|\, \lim_{n\to+\infty} d_n=+\infty$, and any $\varepsilon>0$, there exists a subsequence $(n_k)$  such that
 $$\lim_{k\to+\infty}\|(A[d_{n_k}]+\mu)^{-1}f_{n_k}-(A[+\infty]+\mu)^{-1}f\|<\varepsilon.$$
\end{lemma}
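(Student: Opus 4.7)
Set $R_n := (A[d_n]+\mu)^{-1}$, $R := (A[+\infty]+\mu)^{-1}$, $g_n := R_n f_n$ and $g := R f$. I will prove the slightly stronger statement that \emph{some subsequence} $(g_{n_k})$ converges to $g$ strongly in $L^2(\Omega)$, which is more than the lemma asks for. Since $d_n\to+\infty$, eventually $d_n\ge \alpha_0$, so $(A[d_n]+\mu)\ge 1$ and hence $\|g_n\|\le\|f_n\|=1$.

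The first real step is to bound $(g_n)$ in the form domain of $A[+\infty]$ and then invoke a compactness argument. Writing
\[\langle g_n, A[d_n] g_n\rangle = \langle (A[d_n]+\mu) g_n, g_n\rangle - \mu\|g_n\|^2 = \langle f_n, g_n\rangle - \mu\|g_n\|^2\]
and combining with the form inequality $C A[+\infty]+K\le A[d_n]$ from hypothesis (ii) produces a uniform bound on $\langle g_n,(A[+\infty]+\mu) g_n\rangle$; that is, $(g_n)$ is bounded in the form-norm Hilbert space $Q(A[+\infty])$. Hypothesis (iv) together with the functional calculus shows that $(A[+\infty]+\mu)^{-1/2}$ is compact, equivalently that the embedding $Q(A[+\infty])\hookrightarrow L^2(\Omega)$ is compact, so I may extract a subsequence $g_{n_k}\to h$ strongly in $L^2(\Omega)$ for some $h$.

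It remains to identify $h=g$. I test against an arbitrary $\varphi\in C_0^\infty(\Omega)$ with compact support $\mathcal{K}\subset\Omega$. By hypothesis (iii), for all sufficiently large $k$ one has $A[d_{n_k}]\varphi=A[+\infty]\varphi$, so
\[\langle g_{n_k},(A[+\infty]+\mu)\varphi\rangle = \langle (A[d_{n_k}]+\mu) g_{n_k},\varphi\rangle = \langle f_{n_k},\varphi\rangle.\]
Passing $k\to+\infty$, using the strong convergence $g_{n_k}\to h$ on the left and the weak convergence $f_{n_k}\rightharpoonup f$ on the right, I obtain $\langle h,(A[+\infty]+\mu)\varphi\rangle=\langle f,\varphi\rangle$ for every $\varphi$ in the core $C_0^\infty(\Omega)$ of the self-adjoint operator $A[+\infty]+\mu$. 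Self-adjointness then forces $h\in\dom{A[+\infty]}$ with $(A[+\infty]+\mu)h=f$, i.e.\ $h=g$, which completes the proof.

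\textbf{Main obstacle.} The only piece of machinery that needs to be invoked with care is the compact embedding $Q(A[+\infty])\hookrightarrow L^2(\Omega)$ that promotes the form-bound of the second paragraph to a strongly $L^2$-convergent subsequence; once this is in place, hypotheses (i)--(iii) assemble essentially automatically through the test-function computation above. The assertion of the lemma with an arbitrary $\varepsilon>0$ turns out to be strictly weaker than what this argument delivers, which is why the $\varepsilon$ plays no active role in the proof.
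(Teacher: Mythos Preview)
Your argument is correct and follows the same overall strategy as the paper: bound $g_n$ in the form domain of $A[+\infty]$ via hypothesis~(ii), extract a subsequence by compactness~(iv), and identify the limit by testing against $C_0^\infty$ functions using~(iii). The difference is in execution. The paper first approximates each $u_n$ by some $\tilde u_n\in C_0^\infty(\Omega)$ to within $\varepsilon$, works with the $\tilde u_n$ throughout, and at the end returns to $u_n$ by the triangle inequality --- this is precisely where the $\varepsilon$ in the statement is used, and why the conclusion is phrased as ``$<\varepsilon$'' rather than ``$=0$''. You bypass this approximation by applying the form inequality~(ii) directly to $g_n\in\dom{A[d_n]}\subset Q(A[d_n])\subset Q(A[+\infty])$, which is legitimate under the standard form-sense reading of~(ii). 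You also go straight to strong $L^2$-convergence via the compact embedding $Q(A[+\infty])\hookrightarrow L^2(\Omega)$, whereas the paper first extracts a weakly convergent subsequence in the form topology and only later upgrades to strong convergence through the compact resolvent. Your route is shorter and yields the stronger conclusion $\lim_k\|g_{n_k}-g\|=0$, making the $\varepsilon$ superfluous, as you observed.
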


\begin{proof}
  Let $u_{n}$ be uniquely defined by
 $$(A[d_n]+\mu)u_{n}=f_{n}.$$
 Due to the hypothesis \eqref{prop:core} of Theorem \ref{theo:abstract}, we may construct sequences $(\tilde{u}_n)$ and $(\tilde{f}_n)$ with properties
 \begin{equation*}
  \tilde{u}_n\in C_{0}^{\infty}(\Omega),\quad (A[d_n]+\mu)\tilde{u}_{n}=\tilde{f}_{n},\quad \|\tilde{u}_n-u_n\|<\varepsilon,\quad  \|\tilde{f}_n-f_n\|<\varepsilon,
 \end{equation*}
 for all $n\in\N$.
 
 Using the hypothesis \eqref{prop:low_bound} we obtain
 \begin{multline*}
  C \langle (A[+\infty]+\mu)^{1/2}\tilde{u}_{n},(A[+\infty]+\mu)^{1/2}\tilde{u}_{n}\rangle+(K+\mu(1-C)) \|\tilde{u}_n\|\leq\\
  \langle (A[d_n]+\mu)^{1/2}\tilde{u}_{n},(A[d_n]+\mu)^{1/2}\tilde{u}_{n}\rangle
  =\langle \tilde{u}_{n},\tilde{f}_{n}\rangle\leq \|\tilde{u}_{n}\| \|\tilde{f}_{n}\|=\|\tilde{u}_{n}\|(1+\varepsilon).
 \end{multline*}
 Since $\|\tilde{u}_{n}\|\leq \|(A[d_n]+\mu)^{-1}\|\,\|\tilde{f}_n\|\leq(1+\varepsilon)$,
 we deduce from here that $(\tilde{u}_{n})$ is bounded in the topology of $(A[+\infty]+\mu)^{1/2}$. Consequently, there is a weakly convergent subsequence $(\tilde{u}_{n_k})$ with respect to this topology, whose limit will be denoted by $\tilde{u}$.

 Consider $v\in C_{0}^{\infty}(\Omega)$. Then we have
 $$\langle (A[d_{n_k}]+\mu)^{1/2} v, (A[d_{n_k}]+\mu)^{1/2}\tilde{u}_{n_k}\rangle=\langle v, \tilde{f}_{n_k}\rangle,$$
 which, due to the hypothesis \eqref{prop:conv}  and the Urysohn lemma implies that, for all $k$ large enough,
 \begin{equation}\label{eq:form_eq}
  \langle (A[+\infty]+\mu)^{1/2} v, (A[+\infty]+\mu)^{1/2}\tilde{u}_{n_k}\rangle=\langle v, \tilde{f}_{n_k}\rangle.
 \end{equation}
 Since $(\tilde{f}_{n_k})$ is bounded, it has a weakly convergent subsequence, say $(\tilde{f}_{\tilde{n}_k})$.  We will abuse the notation a little and write just $n_k$ instead of $\tilde{n}_k$.
  Now, in the limit $k\to +\infty$, \eqref{eq:form_eq} yields
 \begin{equation}\label{eq:form_eq_2}
 \langle (A[+\infty]+\mu)^{1/2} v, (A[+\infty]+\mu)^{1/2}\tilde{u}\rangle=\langle v, \tilde{f}\rangle.
 \end{equation}
 
 Since $C_{0}^{\infty}(\Omega)$ is a core of $(A[+\infty]+\mu)$ and any core of a selfadjoint operator is a form core too, \eqref{eq:form_eq_2} implies that $\tilde{u}\in\dom{A[+\infty]}$ and $(A[+\infty]+\mu)\tilde{u}=\tilde{f}$.  Moreover, $(A[+\infty]+\mu)\tilde{u}_{n_k}-\tilde{f}_{n_k}\xrightarrow[k\to\infty]{w}0$ by \eqref{eq:form_eq} and $(A[d_{n_k}]+\mu)\tilde{u}_{n_k}=\tilde{f}_{n_k}\xrightarrow[k\to\infty]{w}\tilde{f}$, which yields $(A[+\infty]+\mu)\tilde{u}_{n_k}\xrightarrow[k\to\infty]{w}(A[+\infty]+\mu)\tilde{u}$. 
 Finally, by compactness of $(A[+\infty]+\mu)^{-1}$, $\tilde{u}_{n_k}\xrightarrow[k\to\infty]{s}\tilde{u}$, i.e.,
 $$\lim_{k\to+\infty}\|\tilde{u}_{n_k}-\tilde{u}\|=\lim_{k\to+\infty}\|(A[d_{n_k}]+\mu)^{-1}\tilde{f}_{n_k}-(A[+\infty]+\mu)^{-1}\tilde{f}\|=0.$$
 
 Coming back to the untilded sequences, for all $k$ large enough, we obtain
 \begin{multline*}
  \|(A[d_{n_k}]+\mu)^{-1}f_{n_k}-(A[+\infty]+\mu)^{-1}f\|\leq \|(A[d_{n_k}]+\mu)^{-1}\tilde{f}_{n_k}-(A[+\infty]+\mu)^{-1}\tilde{f}\|\\
  +\|(A[d_{n_k}]+\mu)^{-1}(f_{n_k}-\tilde{f}_{n_k})\|+\|(A[+\infty]+\mu)^{-1}(\tilde{f}-f)\|\\
  \leq \|(A[d_{n_k}]+\mu)^{-1}\tilde{f}_{n_k}-(A[+\infty]+\mu)^{-1}\tilde{f}\|+\|f_{n_k}-\tilde{f}_{n_k}\|+\|\tilde{f}-f\|.
 \end{multline*}
 The limit of the first term is zero, the second term is bounded by $\varepsilon$, and the third term is bounded by $\varepsilon$, too, since $\|\tilde{f}-f\|^2=\lim_{k\to +\infty}\langle\tilde{f}-f, \tilde{f}_{n_k}-f_{n_k}\rangle\leq \|\tilde{f}-f\| \lim_{k\to +\infty}\|\tilde{f}_{n_k}-f_{n_k}\|\leq \varepsilon\|\tilde{f}-f\|.$
\end{proof}

\begin{proof}[Proof of Theorem \ref{theo:abstract}]
 For $z=-\mu$ with $\mu$ specified in Lemma \ref{lem:abstract}, the theorem follows immediately. For $z\in\mathrm{Res}(A[+\infty])$, we use formula \cite[(3.10) of Chap. 4]{kato} to extend the resolvent estimate.
\end{proof}

\begin{remark}[Alternative proof of Theorem \ref{theo:abstract}] During the peer review process one of the reviewers proposed an alternative proof that actually does not require $L^2$-setting of Theorem \ref{theo:abstract}. It is based on some properties of collectively compact operator sequences. A set of operators is called collectively compact if and only  the union of the images of the unit ball is precompact \cite{AnPa_68}. The hypotheses \eqref{prop:low_bound} and \eqref{prop:comp} of Theorem \ref{theo:abstract} imply that $\{(A[\alpha]-i)^{-1},\, \alpha\geq\alpha_0\}$ is collectively compact. Indeed, let $\mathscr{B}$ be the closed unit ball (in $L^2(\Omega)$) and $M:=\cup_{\alpha\geq\alpha_0}\{(A[\alpha]-i)^{-1}\mathscr{B}\}$. If we take $\psi\in M$ then there exists $\alpha\geq\alpha_0$ such that $\psi=(A[\alpha]-i)^{-1}\varphi$ for some $\varphi\in L^2(\Omega)$ with $\|\varphi\|\leq 1$. Using the functional calculus, we infer that $\|(A[\alpha]-i)^{-1}\|\leq 1$ and $\|A(A-i)^{-1}\|\leq 1$. 
Consequently, $\|\psi\|\leq 1$ and $\langle\psi, A[\alpha]\psi\rangle\leq \|(A[\alpha]-i)^{-1}\| \|A(A-i)^{-1}\|\|\varphi\|^2\leq 1.$ If we introduce sets $\mathscr{A}_{\alpha,b}:=\{\psi\in Q(A[\alpha]):\, \|\psi\|\leq 1,\, \langle \psi, A[\alpha]\psi\rangle\leq b\}$, we may write $\psi\in\mathscr{A}_{\alpha,1}$. By \eqref{prop:low_bound} of Theorem \ref{theo:abstract}, 
$\psi\in\mathscr{A}_{+\infty,(1-K)/C}$. We conclude that $M\subset\mathscr{A}_{+\infty,(1-K)/C}$. The latter set is compact due to \eqref{prop:comp} of Theorem \ref{theo:abstract} and \cite[Theo. XIII.64]{RS4}. Therefore, $M$ is precompact.

Using the hypotheses \eqref{prop:core} and \eqref{prop:conv} of Theorem \ref{theo:abstract} together with \cite[Corollary VIII.1.6]{kato}, we deduce that $(A[\alpha]-i)^{-1}\xrightarrow[\alpha\to +\infty]{s}(A[+\infty]-i)^{-1}$. By \cite[Theo. 3.4 and Prop. 2.1]{AnPa_68}, the strong-resolvent convergence together with the collective compactness imply the norm-resolvent convergence, i.e., we have $\lim_{\alpha\to +\infty}\|(A[\alpha]-i)^{-1}-(A[+\infty]-i)^{-1}\|=0$.
\end{remark}

\section{Acknowledgement}
The author wishes to express his thanks to P. Exner for drawing his attention to the Iwatsuka model and to D. Krej\v{c}i\v{r}\'{i}k for pointing out some of his results that were useful in Section \ref{sec:abstract_result}. Last but not least, the author gratefully acknowledges many useful suggestions of one of the reviewers that helped improve the manuscript significantly. The work has been supported by the grant No. 13--11058S of the Czech Science Foundation (GA\v{C}R).

\end{document}